\DeclarePairedDelimiter{\ceil}{\lceil}{\rceil}
\newtheorem{theorem}{Theorem}
\newtheorem{lemma}{Lemma}
\newtheorem{invariant}{Invariant}
\newtheorem{definition}{Definition}[section]
\newcommand{\fast}[1]{{\color{blue}#1}}
\newcommand{\gen}[1]{{\color{blue}#1}}
\title{\LARGE \bf
A Hierarchical Model for Fast Distributed Consensus \\ in Dynamic Networks
}
\author{Timothy Castiglia, Colin Goldberg, and Stacy Patterson
\thanks{T. Castiglia, C. Goldberg, and S. Patterson are with the Department 
        of Computer Science, Rensselaer Polytechnic Institute, 
        110 8th St, Troy, NY 12180,
        {\tt\small castit@rpi.edu, goldbc@rpi.edu, sep@cs.rpi.edu}.}
    \thanks{This work was supported in part by NSF grants CNS-1553340 and CNS-1816307.}
}
\newcommand{\sep}[1]{}
\newcommand{\tim}[1]{}
\begin{document}

\maketitle 
\thispagestyle{empty}
\pagestyle{empty}

\begin{abstract}
We present two new consensus algorithms for dynamic networks. The first, Fast Raft, is a variation on the Raft consensus algorithm that reduces the number of message rounds in typical operation. Fast Raft is ideal for fast-paced distributed systems where membership changes over time and where sites must reach consensus quickly. The second, C-Raft, is targeted for distributed systems where sites are grouped into clusters, with fast communication within clusters and slower communication between clusters. C-Raft uses Fast Raft as a building block and defines a hierarchical model of consensus to improve upon throughput in globally distributed systems. We prove the safety and liveness properties of each algorithm. Finally, we present an experimental evaluation of both algorithms in AWS.
\end{abstract}

\section{Introduction}

State machine replication is a foundational tool that is used to provide availability and fault tolerance in distributed systems.
Sites use consensus algorithms as a method to achieve agreement on the order of updates.
Such algorithms appear in many of today's distributed systems.
Chubby~\cite{chubby} utilizes consensus for the
Google File System to elect a master server that coordinates
replication of files.
Autopilot~\cite{isard2007autopilot} creates fault-tolerant
replicas in Microsoft's data centers around the world using consensus.
Quorum~\cite{buterin2013ethereum}, a permissioned 
blockchain technology based on Ethereum,
achieves data consistency through consensus.

Modern distributed systems are often 
large-scale, globally distributed, and dynamic. 
Data centers used by Google, Amazon, and Microsoft are 
spread globally to reach people across the world~\cite{rad2009survey}. 
Cellular networks
face latency across continents and are highly dynamic with mobile
participants~\cite{tzanakaki2013virtualization}. 
IoT/edge networks often face failures and unreliable messaging,
requiring fault-tolerant systems~\cite{mqttrss}.


Two of the most widely adopted consensus algorithms, Paxos~\cite{lamport2001paxos}
and Raft~\cite{ongaro2014consensus}, are not designed with 
such systems in mind.
These algorithms specify methods to maintain a replicated
log in an asynchronous system with message loss and crash failures.
Both require several rounds of messaging between
a leader and a majority of sites. Such an operation can
lead to high latency for a client's value to be published.
This can be problematic in modern systems that service millions
of clients every day. Fast Paxos~\cite{lamport2006fast} is a variation
on Paxos that reduces the number of message rounds in typical operation
to mitigate this problem.

Raft specifies methods for dealing with membership
changes, but with the assumption that sites are added and removed by a system 
administrator, such that membership changes are never unexpected by the cluster.
Vertical Paxos~\cite{lamport2009vertical}, another Paxos variation, 
similarly depends on an auxiliary configuration master for reconfiguration.
However, in many distributed systems, membership changes
may be sudden, and may occur silently.
Many variations on Paxos variations have been proposed that 
improve consensus for modern systems, such as Dynamic Paxos~\cite{nawab2018dpaxos},
EPaxos~\cite{moraru2013there}, and Delegator Paxos~\cite{liu2016d}. These algorithms either do
not handle dynamic membership explicitly, or require the same number
message rounds as Paxos and Raft do.

To address the limitations of previous works, 
we propose Fast Raft, 
with the goal of improving consensus speed in dynamic and
globally distributed systems. 
Fast Raft is based on Fast Paxos, 
and speeds up typical consensus operation by
paying a slightly higher penalty for recovery in the face of failures.
Fast Raft specifies details omitted from Fast Paxos,
such as leader election and unexpected membership changes.

While Fast Raft addresses some of the limitations of consensus algorithms 
in fast-paced dynamic networks, it still requires communication between 
a leader and all sites in the system to reach consensus.
This communication paradigm may not scale to global networks, where such 
all-to-one communication is both time and bandwidth consuming.
In such  systems, a hierarchical model can be beneficial.
Here, sites are grouped into clusters. The bulk of the computation is performed 
within each cluster, with results combined globally
at a lower frequency. Such a hierarchy can be either physical
or logical.
Physical hierarchical models have been utilized in database 
systems~\cite{connolly2005practical}, sensor networks~\cite{bandyopadhyay2003energy} 
and edge computing~\cite{skala2015scalable}.
Federated learning~\cite{shokri2015privacy} is an example
of a logical hierarchy used to achieve both privacy and improved
model training.
Such a logical hierarchy can be applied to blockchain as well.
Desu et al.~\cite{dasu2018unchain} proposed a partitioning of blockchain into
a hierarchy of sub-chains to improve scalability
and reduce energy consumption. 

To complement Fast Raft, we propose a hierarchical consensus algorithm that we call \emph{Clustered Raft} or \emph{C-Raft}.
C-Raft is
designed to improve throughput in globally distributed systems by
utilizing a hierarchical structure. Rather than all sites taking
part in consensus, sites in a cluster run local consensus,
then cluster leaders replicate entries in batches to other
clusters. C-Raft is defined with Fast Raft as a building block, 
further improving the speed of consensus in typical operation.
We provide a specification for C-Raft and prove that it
satisfies safety and liveness properties.
We also provide experimental results of the performance of
Fast Raft and C-Raft against classic Raft in AWS. On average, Fast
Raft, is twice as fast as classic Raft if message loss is
below $5\%$
while C-Raft achieves $5$x the throughput of Raft 
in a globally distributed system.


The rest of the paper is organized as follows. Section~\ref{formulation.sec}
formulates the problem and states the properties we wish to satisfy.
In Section~\ref{background.sec}, we provide an overview of the
classic Raft and Fast Raft algorithms.
Section~\ref{fast.sec} presents detailed pseudocode for
Fast Raft, as well as proves the properties it satisfies.
We present the C-Raft algorithm and prove its safety and liveness in Section~\ref{craft.sec}.
Section~\ref{exp.sec} presents
experiments comparing Fast Raft and C-Raft
with classic Raft. 
We discuss related works in
Section~\ref{related.sec}, and
conclude in Section~\ref{conclusion.sec}.

\section{System Model}
\label{formulation.sec}

We consider a system that consists of a set of sites. 
Sites can crash and may recover, and we assume each site has a means 
of stable storage that can be read from upon recovery.
Messaging between sites is asynchronous with potential message loss.  
Sites may join or leave the system over time. 
It is assumed joining sites have a means of contacting 
sites already in the system.

A \emph{global log} is replicated at every site. The log
is an infinite array of log entries, indexed using the natural numbers. 
Sites \emph{propose} entries to
be inserted in the global log. Sites reach
consensus on the entry, at which point the entry is \emph{committed}.

Our goal is to satisfy the following properties.

\begin{definition}{\textit{Safety}} \label{safety.def}
If a site commits an entry at some index in the log,
no site can commit a different entry at the same index.
\end{definition}

\begin{definition}{\textit{Liveness}} \label{liveness.def}
For any proposed
entry $v$, it will eventually be the case that $v$
is stored in the log at all sites.  
\end{definition}

By satisfying safety, every site will have the same entries
in each index of the logs, thus defining a \emph{total order} 
for every entry committed. 
By the findings of Fischer, Lynch, and Paterson~\cite{fischer1985impossibility}, 
we cannot guarantee liveness in a fully asynchronous system with crash failures
without additional conditions. Along with the specification of Fast Raft and C-Raft, we
prove they satisfy safety, and identify conditions for liveness. 

\section{Background and Algorithm Overview} \label{background.sec}
In this section, we provide an overview of classic
Raft, and the modifications made to construct Fast Raft.

\subsection{Classic Raft}

In classic Raft, time is split into \emph{terms}
numbered in a monotonically increasing manner.
Sites take on one or more roles in a term.
\emph{Proposers} propose new log entries to be appended to the log.
A unique \emph{leader} gathers proposals of log
entries and coordinates consensus on the entries. There are
\emph{followers} that participate in consensus on new log entries.
Finally, there are \emph{candidates} that attempt to be elected
as a new leader if they suspect the current leader of having failed. 
In a typical term, one candidate is elected
as a leader, then consensus on log entries can begin. We describe the means of 
consensus and election below.

The log can contain both proposed entries and committed entries.
The $commitIndex$ is a value stored at each site that determines
which values in the site's log are committed. Entries in indices 
at or before $commitIndex$ are considered committed by the site.
Proposers will send proposed entries to the term's leader.
The leader appends proposed entries to its log and
it sends an $AppendEntries$ message to followers
to also append the entry. When the leader receives acknowledgement
that a majority, or \emph{classic quorum}, 
has appended the entry, the leader's $commitIndex$ is updated, indicating that the value has been committed.
The leader then notifies the proposer. On subsequent $AppendEntries$,
the leader will include its $commitIndex$ and followers can update
their own $commitIndex$ accordingly.

\begin{figure}[t]
    \centering
    \includegraphics[scale=0.5]{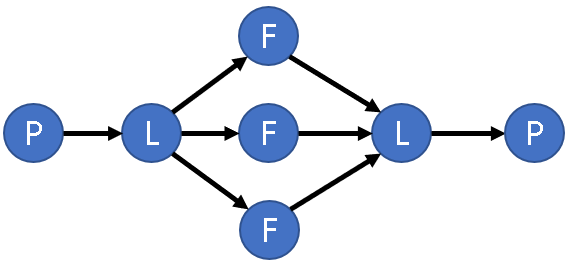}
    \caption{Message flow of appending an entry in classic Raft. 
    $P$ is the proposer, $L$ is the leader, $F$ are the followers.}
    \label{classic.fig}
\end{figure}

In our system model, it is possible for messages to be lost, or
sites to crash. If a leader fails, then a new leader
must be elected. As a means of failure detection, the leader 
maintains a heartbeat with followers. If a follower does not
receive a heartbeat message within some \emph{election timeout},
the follower converts to a \emph{candidate}, increments its term
number, and begins an election. The candidate sends $RequestVote$
messages to all sites, who are notified of the term change.
Messages from leaders of lower terms will be rejected and notified
that a new term has begun. 
Upon receiving a $RequestVote$ message, a site sends a vote
to the candidate if the candidate's log is at least as \emph{up-to-date}
as its own. A log with the most log entries from the most recent term
is considered the most up-to-date. If a majority of sites
send votes to the candidate, it becomes the leader for that term.

It is possible that multiple followers time out and become
candidates for the same term. This may lead to no candidate 
being elected in a term. Candidates that fail to receive enough
votes wait for a randomized timeout before incrementing the
term number again and retrying an election. These randomized
timeouts ensure that a leader is eventually elected with high
probability. A Candidate
converts back to a follower if it receives an $AppendEntries$
message from a newly elected leader at the same or higher term 
number. 

To allow for dynamic membership, Raft defines a special
type of log entry called a \emph{configuration} entry. This entry
contains a list of all \emph{voting members} of the system,
the sites can take part in consensus. If a site receives a consensus-related message from a site that is not in the configuration, the
message is ignored. Each site considers the last
appended configuration entry to be its current configuration.
Raft assumes a system administrator proposes configuration 
changes to the leader, and that only one site is added
or removed from the configuration at a time. Without this
restriction, it may be possible that messages are lost in
consensus on reconfiguration, and there is a group of sites with the
old configuration and quorum size.
If the old quorum does not overlap with every 
new quorum, two leaders may be elected, violating safety.
 
When a site is proposed to be added to a configuration, the leader
first catches up the site on all entries in the leader's log. 
During this period, the joining site is considered
a \emph{non-voting member} of the system, and it cannot take
part in consensus yet.
Once caught up, the new configuration is appended 
to the leader's log and consensus is run. Once a majority has
appended the new configuration entry, the joining site
is considered a voting member. A similar protocol is followed
for a site leaving the configuration.

\subsection{Fast Raft}

When there are no concurrent proposals, Fast Raft decreases
the number of message rounds from three to two before an entry is committed.
To achieve this without violating safety, Fast Raft requires
a recovery mechanism in the face of failures.

Fast Raft has the same roles as classic Raft, but consensus works
differently. Fast Raft provides two methods of committing values
to the log: a \emph{fast track} and a \emph{classic track}.
The fast track, as the name indicates, requires fewer rounds to
commit a value. The classic track is followed if the fast track
fails due to message loss or concurrent proposals. We describe
these tracks below.

When a proposer has a new entry $e$ to be inserted in the log at an
index $i$, 
rather than sending the entry to the leader, the proposer
sends the entry to all sites. Upon receiving the new entry, 
a site $a$ inserts the proposed entry to its log at index $i$.
This contrasts with classic Raft, where the log is treated as a 
growing list that is always appended to. In the case of Fast Raft,
site $a$ may miss a proposal for an entry at index $j < i$. 
Follower $a$ will insert the entry at $i$, leaving index $j$ empty.
After inserting, the follower forwards the message to
the leader, indicating that $a$ has \emph{voted} for entry $e$.
At this point, entry $e$ in site $a$'s log is marked as \emph{self-approved}, 
as site $a$ inserted the entry itself. 

\begin{figure}[t]
    \centering
    \includegraphics[scale=0.5]{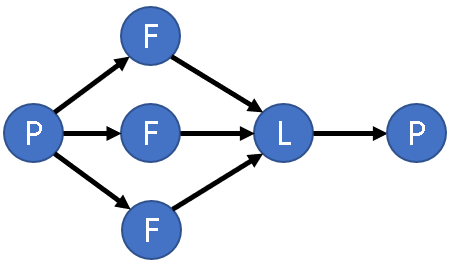}
    \caption{Message flow of inserting a new entry in Fast Raft.
    $P$ is the proposer, $L$ is the leader, $F$ are the followers.}
    \label{fast.fig}
\end{figure}

The leader gathers these votes for entries at index $i$.
After receiving votes, the leader makes a decision on the entry that should
be inserted into its own log. 
Let $M$ be the number of sites in the configuration.
If $\ceil{\frac{3M}{4}}$ (or more) sites, a \emph{fast quorum},
have voted for the same entry $e$, then the leader commits $e$.
The message
flow of the fast track in Fast Raft is shown in Figure~\ref{fast.fig}. 

With this method, the leader may receive 
votes for different entries, and may be missing votes if messages are lost.
The leader will wait for at least a classic quorum of votes. 
There are only two scenarios: either a fast quorum has appended the same entry
or not. In the first scenario, the leader needs to insert this entry.
Otherwise, the leader can insert any entry. Due to the chosen quorum sizes,
if a fast quorum has inserted an entry, it will be the entry with the most
votes in any classic quorum. We provide intuition as to why this is in
the following example.
Consider a scenario with five sites. 
For the same index, proposals $e$ and $f$ were proposed. Four sites, a fast quorum, 
insert $e$, and one site inserts $f$. Vote messages are lost and the 
leader only receives from three sites, a classic quorum. No matter which
three sites the leader receives votes from, two of the sites must have voted
for $e$. Thus, if a entry has been inserted by a fast quorum, 
the leader will always insert that entry.

If the leader inserts an entry but does not commit it, we 
revert to the classic track, which is identical to classic Raft.
The leader sends $AppendEntries$ messages to have a classic quorum
insert the entry that had the most votes.
Entries inserted on the classic track are marked as \emph{leader-approved}.
If a follower receives an entry from the leader that it already inserted,
it will update it to be leader-approved.
Note, the classic track only occurs after attempting the fast
track, and thus, in this situation, we suffer the penalty of an 
extra message round compared to classic Raft.

Leader election follows the same flow as in classic Raft with some
alterations. Since
proposers can send directly to followers, the definition of up-to-date
is modified to only include leader-approved entries.
Self-approved entries cannot be considered in this check, as proposers
can send an arbitrarily large number of proposals to a follower that
ultimately may not have been agreed upon. 
Once the most up-to-date candidate is elected, Fast Raft runs
a recovery algorithm. Self-approved entries were not 
considered in the election, and need to be evaluated to ensure safety.
All followers resend their self-approved entries to the newly elected leader.
If a leader from a previous term committed any of these entries, then
a there will be a fast quorum that has inserted the entry, and the new
leader will make the same decision as previous leaders and commit the entry.

Fast Raft deals with dynamic membership similarly to classic Raft.
However, we do not assume a system administrator proposes configuration
changes. Instead, joining and leaving sites send join 
or leave requests to the leader. It is now the role of the leader to
ensure that only one site joins at a time by processing join requests
sequentially. Any sites with
pending joins are considered non-voting members of the system.
Unlike Raft, Fast Raft has specification for when a site leaves
the system silently, as sites may not always propose a leave request
before leaving the system.


\section{Fast Raft}
\label{fast.sec}

We now dive into the details and pseudocode for the Fast Raft algorithm
and provide a proof of safety and discussion of liveness.
Some parts of the Fast Raft specification are the same as in classic
Raft. We mark items that are new to Fast Raft in blue and with a $\dagger$ symbol.

\subsection{Site State}

Each site stores some state that is persistent in stable
storage, as well as volatile state. 

\vspace{0.5mm}
\noindent\fbox{%
\parbox{\linewidth}{%
\textbf{Persistent state on all sites:}
\begin{itemize}
    \setlength\itemsep{0em}
    \item $currentTerm:$ latest term sent to site (initialized to 0)
    \item $votedFor:$ candidateId that site voted for in current term (initialized to null)
    \item $log[]:$ global log to be replicated 
    \item $lastLogIndex:$ last index appended to log 
    \item \fast{$lastLeaderIndex:$ last index appended to log by the leader.$^\dagger$}
    \item $configuration:$ last configuration appended to the log.
    \item $commitIndex:$ index of highest log entry known to be
        committed (initialized to 0)
\end{itemize}}}
\vspace{0.01mm}

The variables $currentTerm$ and $votedFor$ are 
used for leader election to help determine which site will
be the new leader. This is discussed more in Section~\ref{election.sec}.
The $lastLogIndex$, $lastLeaderIndex$, and $configuration$ can be determined based
on the log, but are separate variables here for notation
convenience.

As discussed in the overview, the $commitIndex$ indicates 
the entries in the log that are committed. Any entry in an
index larger than $commitIndex$ is not yet committed.
As $commitIndex$ is in volatile state, 
if a site crashes and recovers, it will need to relearn which
log entries are committed from the current leader.

\vspace{0.5mm}
\noindent\fbox{%
\parbox{\linewidth}{%
\textbf{Contents of a log entry:}
\begin{itemize}
    \setlength\itemsep{0em}
    \item $data:$ data to be replicated.
    \item $term:$ term number when the entry was appended.
    \item \fast{$insertedBy:$ either self or leader.$^\dagger$}
\end{itemize}}}
\vspace{0.01mm}

Each log entry contains some data to be replicated and
the term number in which it was added. This information is necessary for 
deciding which site is most up-to-date and should be elected leader.
The only change from classic Raft here is the $insertedBy$ value, indicating
if the entry was self-approved or leader-approved. 

%

\vspace{0.5mm}
\noindent\fbox{%
\parbox{\linewidth}{%
\textbf{Volatile state on the leader:}
\begin{itemize}
    \setlength\itemsep{0em}
    \item $nextIndex[]:$ for each site, index of the next log entry
        to send to that site (initialized to leader's
        last \fast{committed$^\dagger$} log entry $+ 1$).

    \item $matchIndex[]:$ for each site, index of highest log entry
        known to be replicated on site (initialized to 0).
    \item \fast{$fastMatchIndex[]:$ for each site, index of highest log entry
        known to be sent to the leader that matches the leader's choice 
        (initialized to 0)$^\dagger$.} 
\end{itemize}}}
\vspace{0.01mm}
\noindent\fbox{%
\parbox{\linewidth}{%
\begin{itemize}
    \item \fast{$possibleEntries[]:$ array of sets of possible entries 
            for the index in the leader's log. Each index corresponds
            to a log index. Each index holds a set of pairs, each consisting of a proposed
            entry and number of votes for that entry$^\dagger$.}
\end{itemize}}}
\vspace{0.01mm}

The leader keeps track of $nextIndex[i]$ to determine
what log entries need to be sent to site $i$. Rather than the leader
sending all entries, it only sends entries starting from $nextIndex$.
$matchIndex[i]$ indicates the latest log index that site $i$ matches with 
the leader. Once a classic quorum is at the same or higher $matchIndex$,  
the leader commits the entry at that index. 

Fast Raft has an additional array, $fastMatchIndex$, separate from $matchIndex$
for determining if an entry can be committed on the fast track or 
classic track.
Each of these arrays contain an index for a site in the configuration,
and thus are subject to size changes. When a site joins, a new
slot in each array must be added. For leaving sites, it may be beneficial for
the values associated with them to be stored in case they 
return to the system at a later time.

Unique to Fast Raft, the $possibleEntries$ structure plays an important role.
A leader in classic Raft would immediately append entries proposed to it.
However, Fast Raft needs a method by which to keep track of the votes of 
followers for a log index. The leader makes its decision on what entry to
insert or commit based on the contents of $possibleEntries$.
This decision is explained in the next subsection.

\subsection{Inserting Entries} \label{insert.sec}
Log entries are proposed to be inserted into the replicated
log at specific indices. The leader gather votes for 
log entries at an index $i$ and determines which
entries to commit. In this
section, we describe the flow of entries from proposed to
committed on both the fast track and the classic track.

\vspace{0.5mm}
\noindent\fbox{%
\parbox{\linewidth}{%
\textbf{To propose an entry:}
\begin{enumerate}
    \setlength\itemsep{0em}
    \item Send log entry to \sout{$leaderId$} \fast{all members in configuration$^\dagger$}.
    \item \fast{If log entry not committed after proposal timeout,
        resend log entry$^\dagger$.}
\end{enumerate}}}
\vspace{0.01mm}

In Fast Raft, proposers send directly to all members, 
rather than going through the leader.
In the case of concurrent proposals, a proposer's entry may be
overwritten by another proposed entry that garnered more votes. 
As such, a proposer has a \emph{proposal timeout}, a period of
time to wait until reproposing if its entry is not committed.

\vspace{0.5mm}
\noindent\fbox{%
\parbox{\linewidth}{%
\fast{
\textbf{When follower receives a proposed entry $e$ for index $i$$^\dagger$:}
\begin{enumerate}
    \setlength\itemsep{0em}
    \item If entry is duplicate and committed, notify proposer.
    \item If there is no entry at index $i$, 
        insert entry to log at index $i$.
    \item Set $e.insertedBy = self$
    \item Send $log[i]$ and $commitIndex$ to $leaderId$.
\end{enumerate}}}}
\vspace{0.01mm}

Sites insert entries sent to them and then send their vote to the leader.
If an entry already exists at index $i$, the follower does not
overwrite it.
This contrasts with Raft, in that sites do not always necessarily append
entries to their end of their log, meaning log entries of higher indices
can be inserted before lower indices if messages are lost. 
Sites need to be aware of duplicate entries, as a proposer may resend its entry if it is not committed.
The leader is treated as a follower in this scenario, and follows
the same process. 

\vspace{0.5mm}
\noindent\fbox{%
\parbox{\linewidth}{%
\fast{
\textbf{When leader receives an entry $e$ for index $k$ from site $i$$^\dagger$:}
\begin{enumerate}
    \setlength\itemsep{0em}
    \item If $e \in possibleEntries[k]$, increment count for $e$.
        Otherwise, add $e$ to $possibleEntries[k]$ with \\ $count=1$.
    \item Set $nextIndex[i] = sentCommitIndex$.
\end{enumerate}
}}}

The leader receives votes from followers for entries at an 
index $k$, and tracks the votes in $possibleEntries[k]$.
The leader also sets the $nextIndex$ for agent $i$ to the
last committed index. In Fast Raft, this is
necessary for ensuring sites stay consistent with
a newly elected leader.
We discuss this in more detail in Section~\ref{election.sec}.

\vspace{0.5mm}
\noindent\fbox{%
\parbox{\linewidth}{%
\fast{
\textbf{Periodically run by the leader$^\dagger$:}
\begin{enumerate}
    \setlength\itemsep{0em}
    \item While there exists a $k = commitIndex+1$ for which 
        at least a classic quorum of votes has been received:
    \begin{enumerate}
        \setlength\itemsep{0em}
        \item Insert entry $e$ from $possibleEntries[k]$ with highest number 
            of votes. Break ties arbitrarily.
        \item Set $e.insertedBy = leader$
        \item Update $fastMatchIndex[i]$ for all $i$ that voted for entry $e$.
        \item If $e$ is elsewhere in $possibleEntries$, set to a $null$ 
            vote to avoid inserting a duplicate entry.
        \item If there is a fast quorum that exists such that $fastMatchIndex[i] \geq k$, and 
            $log[k].term = currentTerm$:
            set $commitIndex = k$
    \end{enumerate}
\end{enumerate}}}}
\vspace{0.01mm}

New to Fast Raft, the leader periodically checks if an entry can
be committed on the fast track. As discussed in the overview, if a fast quorum
has voted for an entry, it can be committed. Otherwise, if at least a classic
quorum has voted for an index, the leader chooses the entry with 
the most votes to insert into its log. The leader then switches to 
the classic track, sending this entry to the follower to insert.

The fast track can only be taken here if the last index was committed.
This restriction is necessary since $commitIndex$ indicates
that all entries at or before the index are committed. If
any entry was able to take the fast track, then $commitIndex$
could be updated prematurely, committing log entries that
should not be. \tim{WHY USE FAST RAFT?}

\vspace{0.5mm}
\noindent\fbox{%
\parbox{\linewidth}{%
\textbf{Periodically run by the leader:}
\begin{enumerate}
    \item \textbf{Create AppendEntries message for each follower containing:}
    \begin{itemize}
        \setlength\itemsep{0em}
        \item $term:$ leader's term
        \item $leaderId:$ for followers to redirect sites
        \item $entries[]:$ log entries to insert
    \end{itemize}
\item For each follower $i$, if $\fast{lastLeaderIndex}^{\dagger} \geq nextIndex[i]$,
    include all log entries in $entries[]$ starting at $nextIndex[i]$ \fast{and
    ending at $lastLeaderIndex^{\dagger}$}.
    \item Send $AppendEntries$ messages to followers.
\end{enumerate}}}
\vspace{0.01mm}

Periodically, the leader sends $AppendEntries$ to all
followers. The message contains entries 
the follower may not have inserted yet.
Even if there are no new entries
for a site, the leader sends an empty message to
followers. This heartbeat lets sites know the 
leader is still active.

\vspace{0.5mm}
\noindent\fbox{%
\parbox{\linewidth}{%
\textbf{When a follower receives AppendEntries message:}
\begin{enumerate}
    \setlength\itemsep{0em}
    \item Reset election timer.
    \item \textbf{Create response to AppendEntries message:}
    \begin{itemize}
        \item $term:$ currentTerm, for leader to update itself
        \item $success:$ true if follower contained entry matching
                $prevLogIndex$ and $prevLogTerm$
        \setlength\itemsep{0em}
    \end{itemize}
    \item If $term < currentTerm$, set $success = false$ and respond.
    \item If an existing entry conflicts with a new one 
        \fast{overwrite the existing entry$^\dagger$}.
    \item If $leaderCommit > commitIndex$ set index to
             the minimum of $leaderCommit$ and $lastLogIndex$
\end{enumerate}}}
\vspace{0.01mm}

Sites check if their term number is higher
than the one sent. If it is, this means the
sender is no longer the leader. The follower will
send back $success = false$ to the old leader. Otherwise, the site continues
to inserting new log entries.
In Fast Raft, sites overwrite any entries at the same indices as
the entries the leader sent to them. Classic Raft would
remove any entries inconsistent with the current leader here. However,
since proposers send to followers first, the leader may be unaware
of entries for some indices. Overwriting these could violate safety. 
Thus, the leader only overwrites entries it has made safe decisions about.
At this point, the entries are leader-approved. 

Note that sites receive the leader's 
$commitIndex$ with new entries. Leaders are always the first to commit an
entry, and followers only update their own $commitIndex$ after receiving from the leader.

\vspace{0.5mm}
\noindent\fbox{%
\parbox{\linewidth}{%
\textbf{When the leader receives AppendEntries message response from follower $i$:}
\begin{enumerate}
    \setlength\itemsep{0em}
    \item If $term > currentTerm$, set $currentTerm = term$
            and convert to a follower.
    \item If $success = true$, update $nextIndex[i]$
        and $matchIndex[i]$.
    \item If an index $k$ exists such that $k > commitIndex$,
        a classic quorum of $matchIndex[i] \geq k$, and 
        $log[k].term = currentTerm$, then 
        set $commitIndex = k$
\end{enumerate}}}
\vspace{0.01mm}

The leader updates $nextIndex$ and $matchIndex$ based on
the value of $success$. If $matchIndex$
indicates a majority has inserted an entry, 
the leader commits the entry.

\subsection{Leader Election} \label{election.sec}
In contrast to Paxos, leader election in classic and 
Fast Raft is worked into the algorithm. The heartbeat message from the leader acts
as a failure detector for the followers. 

\vspace{0.5mm}
\noindent\fbox{%
\parbox{\linewidth}{%
\textbf{When election timeout occurs:}
\begin{enumerate}
    \setlength\itemsep{0em}
    \item Convert to a candidate.
    \item Increment $currentTerm$.
    \item Set $votedFor = self$.
    \item \textbf{Create RequestVote message containing:}
    \begin{itemize}
        \setlength\itemsep{0em}
        \item $term:$ candidate's term
        \item $candidateId:$ candidate requesting vote
        \item $candLastLogIndex:$ index of candidate's last \fast{leader-approved$^\dagger$} log entry 
        \item $candLastLogTerm:$ term of candidate's last \fast{leader-approved$^\dagger$} log entry 
    \end{itemize}
    \item Send $RequestVote$ message to all other sites.
    \item Reset election timer.
\end{enumerate}}}
\vspace{0.01mm}

If a follower
reaches an \emph{election timeout}, it begins leader election
by increasing its term number and requests votes
by sending $RequestVote$ messages to all 
other sites. It is important that the election timeout
cannot be shorter than message delays, otherwise it is 
possible for followers to continuously start elections,
and no progress can be made.

\vspace{0.5mm}
\noindent\fbox{%
\parbox{\linewidth}{%
\textbf{When receiving a RequestVote message from a candidate:}
\begin{enumerate}
    \setlength\itemsep{0em}
    \item \textbf{Create response to RequestVote message:}
    \begin{itemize}
        \setlength\itemsep{0em}
        \item $term:$ currentTerm, for candidate to update itself
        \item $voteGranted:$ true means candidate received vote 
        \item \fast{$selfApprovedEntries[]:$ all self-approved entries 
            in log$^\dagger$}
    \end{itemize}
    \item If $term < currentTerm$ set $voteGranted = false$ and respond.
    \item If $votedFor = null$ or $votedFor = candidateId$, 
        \fast{$candLastLogIndex \geq lastLeaderIndex$
        and $candLastLogTerm \geq log[lastLeaderIndex].term$, or
        $candLastLogTerm > lastLeaderIndex.term^\dagger$},
        set $voteGranted = true$, \fast{add all self-approved entries
        in log to $selfApprovedEntries^\dagger$}
            and respond. 
    \item Otherwise, set $voteGranted = false$ and respond.
\end{enumerate}}}
\vspace{0.01mm}

Sites that receive the $RequestVote$ message immediately 
move to the new term. This means leaders from previous terms are no longer 
leaders. A site votes for a candidate if the candidate 
is as \emph{up-to-date} as it is, or more. 
This ensures that the most up-to-date live site will be elected.

In classic Raft, the most up-to-date site has 
the most log entries from the most recent term. However, 
in Fast Raft, proposers can send directly to followers to insert 
into their log.
Instead, the most up-to-date site has the 
most recent leader-approved log entry.
This alone does not guarantee safety, as a failed site may have
committed an entry on the fast-track. Once elected, the new leader needs
to run a recovery algorithm (discussed below). 
The recovery requires knowledge of self-approved
entries in a classic quorum of sites. The recovery
ensures that if it is possible that an entry was committed,
then it will be committed by the new leader. 
As such, sites that vote for the 
candidate will also include their self-approved entries in 
their $RequestVote$ response message.

\vspace{0.5mm}
\noindent\fbox{%
\parbox{\linewidth}{%
\textbf{When a candidate receives any message:}
\begin{enumerate}
    \setlength\itemsep{0em}
    \item If message is a response to $RequestVote$ and
            $voteGranted = true$, increment votes.
    \item If $AppendEntries$ received from new leader: 
            convert to follower.
    \item If votes received from majority of sites: become leader,
        \fast{copy all self-approved entries received to $possibleEntries$$^\dagger$},
        and send an initial $AppendEntries$ heartbeat.
\end{enumerate}}}
\vspace{0.01mm}

If the candidate is considered up-to-date by a majority, 
meaning it has received their votes, then it 
converts to a leader. Otherwise, it loses the election, and will
retry after a randomized timeout. Other followers may timeout
and attempt to be elected as well.
Randomized timeouts are used to help ensure a leader
is eventually elected.

In Fast Raft, when the new leader is elected, the leader
fills its $possibleEntries$ structure with the received self-approved
entries. As specified in Section
\ref{insert.sec}, for each index where a classic quorum has sent 
their votes for self-approved entries, the leader chooses the entry with the
most votes to insert or commit to its log.
Leader-approved entries are treated the same as they are
treated in classic Raft, but self-approved entries require this resending
of log entries to ensure safety. We discuss more in Section \ref{safety.sec}. 

\subsection{Membership}

Similar to how leader election
is tied into consensus using term numbers, the membership configuration
is a special log entry. When a site wishes to join or leave 
the system, sites first runs consensus on a new log entry that
defines the change in configuration. Sites follow the configuration
that was last inserted into their log to determine quorum sizes
and which sites to communicate with. Messages from sites not listed in the
configuration are ignored.

\vspace{0.5mm}
\noindent\fbox{%
\parbox{\linewidth}{%
\fast{
\textbf{When a site wants to join the system$^\dagger$:}
\begin{enumerate}
    \setlength\itemsep{0em}
    \item Send a join request to sites in the system.
    \item If join not accepted in join timeout, resend request.
\end{enumerate}}}}
\noindent\fbox{%
\parbox{\linewidth}{%
\textbf{Upon receiving a join request:}
\begin{enumerate}
    \setlength\itemsep{0em}
    \item \fast{If not the leader, redirect to the leader$^\dagger$.}
    \item \fast{If a duplicate request, ignore$^\dagger$.}
    \item Catch up joining site.
\end{enumerate}}}
\noindent\fbox{%
\parbox{\linewidth}{%
\textbf{Upon catching up site:}
\begin{enumerate}
    \item Create a log entry for the new configuration
        including the new site.    
    \item Start consensus on new log entry. 
    \item When consensus reached (the entry is committed), 
        notify the new site.
\end{enumerate}}}
\noindent\fbox{%
\parbox{\linewidth}{%
\textbf{Upon receiving a leave request:}
\begin{enumerate}
    \setlength\itemsep{0em}
    \item Create a log entry for the new configuration
            excluding the site.
    \item Start consensus on new log entry.
\end{enumerate}}}
\vspace{0.01mm}

The original Raft paper assumes there is a system administrator 
that proposes configuration changes. In Fast Raft, sites
send join or leave requests to voting members. 
Before a site joins, the leader will catch-up the site by
sending all its log entries up until this point.
During this period, the site is considered a non-voting member.
The leader sends $AppendEntries$ messages to the joining site, but 
does consider its vote in consensus of log entries.
After the joining site is caught up, the leader will create
a new configuration entry to include it, and run consensus on it.
Once the configuration entry is committed, the joining site
can take part in consensus.

As discussed previously, to satisfy safety, 
only one site may join at a time,
which is managed by the leader. For example, if a leader
receives join requests for two sites, it will process the
join requests sequentially. It will create a configuration
entry for one site, wait until the entry is committed,
then create an entry for the other site.



\vspace{0.5mm}
\noindent\fbox{%
\parbox{\linewidth}{%
\fast{
\textbf{If site leaves silently$^\dagger$:}
\begin{enumerate}
    \setlength\itemsep{0em}
    \item If the leader left the system, a new leader is elected
        to detect the failure.
    \item Leader detects silent leave if a follower does not respond
        to enough $AppendEntries$ messages, specified by the member timeout.
    \item Send leave request for the missing site and start consensus
    on new configuration.
\end{enumerate}}}}
\vspace{0.01mm}

As we consider systems where sites can enter
or leave at any time, we have to modify our membership mechanism
from the original specification in classic Raft. Multiple sites
entering the system can be buffered by the current leader,
only allowing one to join the configuration one at a time. 
However, this mechanism cannot be employed for
sites leaving the system.
It is not reasonable to expect that a site requesting to leave will stay in contact, or resend its request if lost. 
Without some centralized system
administrator to ensure sites update configurations before a site
leaves, we must specify how our system will handle a
site leaving silently.

Just as the heartbeat message is used for followers to determine if
the leader has failed, the response to the heartbeat message 
can be used by the leader
to determine if a site has left the system silently. We introduce
a tunable \emph{member timeout}, the number of missed $AppendEntries$ responses 
before the leader initiates consensus on a new configuration that excludes the 
unresponsive follower. 
The follower may still be in the system, in which
case it will need to send a join request to return to the configuration.

\subsection{Safety}\label{safety.sec}

First, we prove safety in Fast Raft ignoring membership
changes. We then show that membership changes do not affect the
proof.
To ensure the safety property is satisfied, we introduce
some invariants that will build up to satisfy it.
\begin{invariant} \label{commit1.invar}
If a follower commits an entry at some index, then the leader has 
committed the same entry in the same index.
\end{invariant}
\begin{invariant} \label{commit2.invar}
If a leader has committed an entry at some index, then no leader
in a previous term has committed a different entry at the same index.
\end{invariant}

\begin{theorem}
    If Invariants \ref{commit1.invar} and \ref{commit2.invar}
    hold, then the safety property holds.
\end{theorem}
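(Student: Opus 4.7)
The plan is to reduce any commitment by an arbitrary site to a commitment by a leader, and then compare leaders' commitments pairwise using the two invariants. Concretely, suppose for contradiction that two sites $s_1$ and $s_2$ commit different entries $e_1 \neq e_2$ at the same index $i$. I will argue that each such commit must be mirrored by a leader's commit at index $i$ of the same entry, and then derive a contradiction from how these leader commits relate across terms.

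First, I would handle the reduction from arbitrary sites to leaders. If $s_1$ is a follower when it commits $e_1$ at index $i$, Invariant~\ref{commit1.invar} directly gives that some leader $L_1$ (the one at the time $s_1$ learned of the commit) has committed $e_1$ at index $i$ in its own log. If $s_1$ is itself a leader, take $L_1 = s_1$. Apply the same reasoning to $s_2$, producing a leader $L_2$ that has committed $e_2$ at index $i$. Thus, without loss of generality, I may assume both commits are performed by leaders $L_1$ and $L_2$ with $e_1 \neq e_2$.

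Next, I would split on whether $L_1$ and $L_2$ belong to the same term. If they belong to different terms, say $L_1$ is in an earlier term than $L_2$, then Invariant~\ref{commit2.invar} applied to $L_2$ forbids $L_2$ from committing an entry at index $i$ different from the one committed by any leader in a previous term; this contradicts $e_1 \neq e_2$. The symmetric argument handles the case where $L_2$ precedes $L_1$. If $L_1$ and $L_2$ are in the same term, I would invoke the fact (used throughout the Fast Raft specification and guaranteed by the leader election protocol: a candidate becomes leader only upon receiving a majority of votes, and each site grants at most one vote per term) that there is at most one leader per term, so $L_1 = L_2$, and a single leader commits only one entry at each index of its own log, again contradicting $e_1 \neq e_2$.

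The main obstacle is ensuring the reduction step is airtight: when a follower's commit is attributed to a leader via Invariant~\ref{commit1.invar}, I need the leader in question to be well-defined and for its commit at index $i$ to truly be $e_1$ (not merely some entry at index $i$). This is exactly the content of Invariant~\ref{commit1.invar} as stated, so once the invariants are accepted the proof is essentially a short case analysis; the bulk of the technical work is deferred to establishing Invariants~\ref{commit1.invar} and~\ref{commit2.invar} themselves, which will be carried out separately.
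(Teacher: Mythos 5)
Your proposal is correct and follows essentially the same route as the paper: reduce follower commits to leader commits via Invariant~\ref{commit1.invar}, then compare leader commits across terms via Invariant~\ref{commit2.invar}. The only difference is presentational -- you make explicit the two auxiliary facts the paper leaves implicit (at most one leader per term, and a single leader never overwrites its own committed entry), which is a slight improvement in rigor but not a different argument.
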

\begin{proof}
By Invariant~\ref{commit1.invar}, followers only commit once notified
that the leader has committed.
By Invariant~\ref{commit2.invar}, leaders will always commit
the same entries as leaders from previous terms.
A leader will never overwrite a committed entry.
As every leader and follower of every term will always commit the
same entries at the same indices, safety is satisfied.
\end{proof}

It is left to prove that Fast Raft satisfies the invariants.
\begin{lemma}
    Invariant~\ref{commit1.invar} holds.
\end{lemma}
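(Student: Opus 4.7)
The approach is to trace every mechanism by which a follower's $commitIndex$ can change and show that each such change is backed by a prior commit at the leader on an identical entry.

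First, I observe that the only place a follower's $commitIndex$ is updated is step 5 of the ``When a follower receives AppendEntries message'' routine, where the follower sets its $commitIndex$ to $\min(leaderCommit, lastLogIndex)$. The value $leaderCommit$ carried in the message is, by construction, the leader's own $commitIndex$ at the moment of sending. Hence the follower's new $commitIndex$ never exceeds the leader's $commitIndex$, and in particular, every index $k$ the follower newly marks as committed was already committed at the leader.

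Next, I would prove a log matching property: whenever the follower's $commitIndex$ reaches $k$, the follower's $log[k]$ is identical to the leader's $log[k]$ at the time of commit. I would argue this by induction on the sequence of messages the follower processes, showing that after any $AppendEntries$ the follower accepts, its log agrees with the leader's log on every index up through the entries just delivered. The essential mechanics are: (i) the leader transmits all entries from $nextIndex[i]$ up to $lastLeaderIndex$, and never advances its $commitIndex$ past $lastLeaderIndex$; (ii) when the follower processes the message it overwrites any conflicting self-approved entries (Fast Raft's modification to step 4), so stale proposer-inserted entries are wiped out before the follower is told those indices are committed; and (iii) $nextIndex[i]$ is maintained so that any index the leader might include in $leaderCommit$ is covered by entries the follower has either already acknowledged or is about to receive.

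The main obstacle will be the third point in combination with Fast Raft's fast track. A follower can end up with a self-approved entry $f$ at index $k$ while the leader later commits a different entry $e$ at $k$ (chosen from $possibleEntries[k]$), so I must rule out the scenario in which the leader raises $leaderCommit$ to $k$ before the follower has had the leader-approved entry $e$ delivered and written over $f$. This reduces to checking two points: that $nextIndex[i]$ is never set so high that the next $AppendEntries$ skips a still-mismatched index beneath the leader's committed prefix; and that the leader's commit step only fires after the corresponding leader-approved entry has been recorded in $log[k]$, guaranteeing that every subsequent $AppendEntries$ carries the correct entry at $k$. Combining this log matching with the bound $\mathit{commitIndex}_{\text{follower}} \le \mathit{commitIndex}_{\text{leader}}$ from the first paragraph yields Invariant~\ref{commit1.invar}.
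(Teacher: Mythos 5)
Your first paragraph is, essentially verbatim, the entirety of the paper's own proof: the follower only updates $commitIndex$ to $\min(leaderCommit, lastLogIndex)$, so it never commits an index the leader has not already committed, and the paper stops there. Everything after your first paragraph goes beyond the paper. You are right that the invariant as stated demands more than the index bound --- it says the follower commits \emph{the same entry}, and in Fast Raft a follower can hold a self-approved entry $f$ at index $k$ while the leader commits a different entry $e$ there, so a log-matching argument is genuinely needed to rule out the follower ``committing'' $f$ when $leaderCommit$ reaches $k$. The mechanisms you identify are the right ones: the leader sets $nextIndex[i]$ back to the follower's reported commit index when it records a vote, ships all entries from $nextIndex[i]$ through $lastLeaderIndex$, and never advances $commitIndex$ past $lastLeaderIndex$; on the follower side, the overwrite of conflicting entries (step 4) precedes the $commitIndex$ update (step 5) within a single $AppendEntries$, so the leader-approved entry lands before the index is marked committed. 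That said, your paragraphs two and three are a plan (``I would prove\ldots'', ``The main obstacle will be\ldots'') rather than a completed induction, so as written the proposal establishes no more than the paper does --- but it correctly diagnoses where the paper's one-line argument is thinner than the invariant it claims to prove, which is worth making explicit.
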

\begin{proof}
A follower only updates its $commitIndex$ by taking the minimum
of the leader's $commitIndex$ and its own $log$'s length. It 
will never set $commitIndex$ to an index greater than the 
leader's $commitIndex$.
Therefore, the follower will only commit entries if the leader
has committed them.
\end{proof}

\begin{lemma} \label{commit2.lemma}
    Invariant~\ref{commit2.invar} holds.
\end{lemma}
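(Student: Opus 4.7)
The plan is to prove Invariant~\ref{commit2.invar} by induction on the term number of the committing leader. The base case (term $1$) is vacuous since no earlier term exists. For the inductive step, I will suppose leaders in all terms $t' < t$ satisfy the invariant, and then consider a leader $L$ in term $t$ that commits entry $e$ at index $i$. Aiming for contradiction, I assume some earlier leader $L'$ in term $t' < t$ committed a different entry $e' \neq e$ at index $i$. By the inductive hypothesis, every leader in terms $t'+1, \dots, t-1$ that committed at index $i$ committed $e'$, so it suffices to derive a contradiction directly between $L$ and the most recent prior committing leader of that index; I will package this as a minimal-counterexample argument on $t - t'$.

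The core of the argument is a quorum-intersection case split on how $L'$ committed $e'$. In the classic-track case, a classic quorum $Q'$ had $e'$ leader-approved at index $i$ with term $t'$. When $L$ was elected it collected votes (and attached self-approved entries) from a classic quorum $Q$; since any two classic quorums intersect in at least one site, some voter in $Q \cap Q'$ reported a log whose last leader-approved entry has term $\geq t'$ at index $\geq i$. The up-to-date predicate in the RequestVote handler then forces $L$'s own leader-approved prefix to contain $e'$ at index $i$, so $L$ never overwrites it and in particular does not commit $e \neq e'$ there. In the fast-track case, a fast quorum $F'$ of size $\lceil 3M/4 \rceil$ inserted $e'$ as self-approved. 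Any classic quorum $Q$ of size $\lceil (M{+}1)/2 \rceil$ intersects $F'$ in strictly more than half of $Q$, so among the self-approved entries for index $i$ forwarded to $L$ in RequestVote responses, $e'$ is the plurality. Following the Periodically-run-by-the-leader rule on the seeded $possibleEntries$, $L$ is forced to insert $e'$ at index $i$ before $commitIndex$ can advance past $i$, again contradicting the commit of $e$.

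To close the induction rigorously I would need two auxiliary observations, which I would state and justify first. One, a leader only commits at index $k$ when $log[k].term = currentTerm$; combined with the reset of $nextIndex[i]$ to the last committed index on receipt of a vote and the overwrite rule in the AppendEntries handler, this ensures that an uncommitted entry at index $i$ in the new leader's log cannot silently displace a committed entry of an earlier term. Two, the recovery step fills $possibleEntries$ with exactly those self-approved entries that could have participated in any fast-track commit in a term $\le t-1$; the fast quorum versus classic quorum counting then guarantees the plurality choice coincides with any actually-committed value.

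The main obstacle will be the fast-track case, specifically pinning down the exact arithmetic that a fast quorum and a classic quorum intersect in strictly more than half of the classic quorum (so that ties cannot occur in the leader's plurality decision), and verifying that this holds for every relevant $M$ after membership has possibly changed. I would handle the membership wrinkle by arguing that the committed configuration at index $i$ is itself propagated under the same invariant, so quorum sizes at the time $L'$ committed and at the time $L$ gathers votes are consistent; membership-change safety then reduces to the one-at-a-time join rule already specified, letting me appeal to the non-membership proof essentially unchanged.
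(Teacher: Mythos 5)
Your proposal is correct and rests on the same key idea as the paper's proof: the paper defines a value to be \emph{chosen} (by a fast quorum of self-/leader-approved copies or a classic quorum of leader-approved copies) and invokes the fact, credited to Zhao, that a fast quorum intersects any classic quorum in a strict majority of that quorum, so the recovery step's plurality rule and the classic-track argument (deferred to classic Raft) force every later leader to re-derive the same value — exactly your two cases. Your induction on terms with a minimal counterexample is just a different packaging of the paper's "once chosen, no other entry can be chosen" invariant, and your flagged arithmetic $\lceil 3M/4\rceil + q - M > q/2$ does check out for all $M$.
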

\begin{proof}
We define a log entry to be \emph{chosen} for an index 
if one of the following two conditions hold:
\begin{enumerate}
\item there is a fast quorum with the same self-approved
    or leader-approved entry, or
\item there is a classic quorum with the
same leader-approved entry.
\end{enumerate}
To prove Lemma~\ref{commit2.lemma}, we must prove that once an entry
is chosen, no other entry can be chosen, and that a leader
of a term will only commit chosen entries.

First, we note that no entries for an index 
can be chosen simultaneously as every classic quorum and fast quorum intersect.
A follower may overwrite its entry for an index with an entry sent from the leader. 
Suppose an entry $v$ was chosen by a fast quorum $R$.
As proven by Zhao in~\cite{zhao2015fast},
any classic quorum of votes that the leader 
could have received on the fast track
must have a majority overlap with $R$.
Entry $v$ will have the most votes in any classic quorum,
and the leader always inserts the entry with the most votes.
Thus, when a leader sends an entry to followers,
either the entry is chosen, or no entry is chosen
for the index. So, a follower never overwrites a chosen entry. 

It is left to prove that a leader always commits a chosen entry.
If the leader of a term $t$ successfully gathers votes
of a fast quorum for the same entry, it commits the chosen entry.
Otherwise, the leader inserts an entry for the classic track. Either
this is a chosen entry, or no entry was previously chosen. The proof of safety for 
the classic track follows directly from the proof for classic Raft~\cite{ongaro2014consensus}.

Suppose the leader of term $t$ fails before committing. 
Leaders of terms $s > t$, upon election, will first 
run Fast Raft's recovery algorithm, gathering the votes of 
all self-approved entries. If an entry was chosen in term $t$,
then it is included in the resent self-approved entries because
a classic quorum is required to elect a leader.
The newly elected leader also inserts the entry with the most votes, 
the chosen value, just as the leader in term $t$ did.
Thus, any entry committed by a leader in term $t$ must
also be committed by all leaders in terms $s > t$.
\end{proof}

We now consider membership changes. 
When a site joins or leave the system, a new configuration entry
is created and committed. As with other log entries, not
all sites necessarily have the new configuration entry inserted.
These sites will not update their quorum sizes. If such a site
loses contact with the current leader, it will start an election,
and could wrongfully think its the new leader due to making decisions
on smaller quorum sizes. This can lead to two leaders being
elected in a term, allowing safety to be violated.

As proven in Ongaro's thesis~\cite{ongaro2014consensus},
safety in classic Raft is preserved 
if only one site joins or leaves at a time. A majority of
the sites reach consensus on membership changes one at a time,
never changing quorum sizes in a such a way to allow two leaders
to be elected.
In Fast Raft, when sites announce their joins and leaves, 
only one site can join or leave at a time, as the leader will
only process join or leave requests one at a time. For each
request, consensus on the new configuration must occur, and thus
at least a majority of sites in the system know about the quorum
size changes.
As Fast Raft's leader election only differs in the
recovery algorithm when a leader is elected, safety
is preserved for announced joins and leaves, just as in classic Raft. 

If sites leave the system silently, the true quorum sizes may decrease.
Thus,
consensus or election decisions may be based on quorum sizes 
that are larger than necessary. 
However, using larger quorums cannot lead to two leaders being elected
and violating safety.
It may affect liveness, as discussed in the next subsection.

\subsection{Liveness}

In order to guarantee liveness, we require no concurrent
proposals, otherwise it is possible proposed entries will be overwritten.
Further, we require at least a classic quorum of messages are
delivered to the leader to complete consensus.


It is also required that if a majority of
sites leave the system silently, that there be a leader 
that remains active long enough to detect and commit the configuration change.
If a majority of sites leave, consensus on new log entries 
cannot be completed on the fast or
classic tracks. After the leader has a member timeout with 
the sites, the leader can insert a 
new configuration and decrease the leader's
perception of quorum sizes, and commit a new configuration entry.
However, if the leader fails before committing the new configuration
entry, or was part of the sites that silently left the system,
the remaining sites will be unable to elect a new leader.
In such a scenario, the system is deadlocked.


\section{C-Raft}
\label{craft.sec}


The goal of C-Raft is improve the throughput of consensus
in globally distributed systems where
latency between distant sites can be very large. C-Raft mitigates
this by performing two levels of consensus: local consensus
with nearby sites, and thus have lower message latency, and then global consensus on batches
of locally committed entries. 

\subsection{System Model}

As in Fast Raft, the system has asynchronous communication
between all sites. Sites form a set of clusters.
The membership of each cluster can change over time, and each site can be a 
member of one or more clusters. 
We assume sites are aware of which cluster they are a member of,
but not necessarily the membership of other sites.
Further, the number of clusters may change over time.
We separate the means of communication within clusters, 
\emph{intra-cluster}, and across clusters, \emph{inter-cluster}.
Both are asynchronous with message loss, but we assume that 
intra-cluster communication is ``easier", whether this is due to
communication cost, bandwidth, or minimum message propagation time.

In addition to the global log, each cluster 
also replicates its own \emph{local log}.
The local log serves two purposes: buffering
of log entries for the global log, and state
replication for inter-cluster consensus.
Within each cluster, sites propose entries to
be placed in the local log of that cluster. Sites of the cluster reach
consensus on the entry, at which point the entry is committed to the local log.
Periodically, the leader of a cluster proposes a batch of local entries 
to be committed to the global log. Batches may be created and proposed
based on how many entries have been placed in
the local log, an amount of time passing, or a user-defined need.

During inter-cluster consensus, it is possible for a cluster leader to fail and
a new one be elected. The local log of the cluster
is used here to ensure a leader's state for
the global log is passed on to new leaders if there is a failure during
at any step of inter-cluster consensus. This is achieved through consensus
on special log entries for inter-cluster consensus state replication.

The hierarchical structure allows proposers
to have their entries replicated locally, with the
guarantee that the entries will eventually be replicated to other
clusters and totally-ordered in the global log. 
C-Raft utilizes this model to improve throughput of consensus
at the global scale.

\subsection{Algorithm}
C-Raft is defined with two levels of Fast Raft:
one for intra-cluster consensus, and one for inter-cluster consensus.
The intra-cluster consensus algorithm is simply the execution of Fast Raft over the members of the cluster.  In inter-cluster consensus, the leaders of each cluster act as
members in a configuration. The members of this configuration are
followers of inter-cluster consensus, and \emph{local leaders} for
intra-cluster consensus. 
A \emph{global leader}
is elected by the local leaders, and Fast Raft is executed the same as specified in
Section~\ref{fast.sec} with some small, but
important, changes. Inter-cluster consensus pseudocode is presented
below, and the changes to Fast Raft are marked in blue and by a $\dagger$ symbol. 
Much of Fast Raft remains the same in inter-cluster consensus, and is
left out of the pseudocode here.


Sites now contain state for both cluster levels, intra-cluster state and 
inter-cluster state, with the same types of variables.
Variables associated with intra-cluster consensus are prefixed with
\emph{local}, and they are prefixed with \emph{global} for inter-cluster consensus.
Sites part of intra-cluster consensus propose entries for the
local log, and local leaders propose batches of local $log$ entries
to be inserted in the global $log$.
Note, that as a local leader is a member of the global configuration,
if a new local leader is elected, it must send a join request to enter the global
configuration.  

During inter-cluster consensus, local leaders receive entries
to insert into their global log either from proposers or the global leader.
The important difference from Fast Raft here is that the 
local leader must first run intra-cluster consensus 
before inserting the entry into its global log. The local leader proposes a special log entry for the local log. We call this a
\emph{global state entry}. The purpose of this log entry is to
replicate the local leader's state in inter-cluster consensus.
The global state entry contains in it the global entries that the local leader
has inserted into the global log, and their indices. 
Sites within the cluster run Fast Raft on this special log entry.
Once the entry is committed to the local log, 
the local leader inserts the entry into its global log.
The purpose of the global state entry is to ensure that 
if the local leader of that cluster fails, the global entry that
was inserted is still available for future local leaders. 

It is possible that the local leader overwrites
its global log entry if the global leader sends a different entry at the 
same index.
In this case, the local leader proposes a global state entry to 
its cluster that overrides a previous one. 
Similar to configuration entries, a site determines its global
state based on the global state entry last inserted in the local log.

\vspace{0.5mm}
\noindent\fbox{%
\parbox{\linewidth}{%
\textbf{When follower receives a proposed entry $e$ for index $i$:}
\begin{enumerate}
    \setlength\itemsep{0em}
    \item If entry is duplicate and committed, notify proposer.
    \item If there is no entry at index $i$, insert entry to log at index $i$.
    \item Set $e.insertedBy = self$
    \item \gen{Run intra-cluster consensus on global 
        state entry for $e$ at $k$$^\dagger$.}
    \item \gen{Once entry is committed to the local log$^\dagger$,} 
        send entry and $commitIndex$ to $leaderId$.
\end{enumerate}}}

\vspace{0.5mm}
\noindent\fbox{%
\parbox{\linewidth}{%
\textbf{Periodically run by the leader:}
\begin{enumerate}
    \setlength\itemsep{0em}
    \item While there exists a $k = commitIndex+1$ for which 
        at least a classic quorum of votes has been received \gen{for the global log$^\dagger$}:
    \item If there is an entry $e$ with $e.insertedBy = leader$, then
        insert to log.
    \item Else, insert entry $e$ from $possibleEntries[k]$ with highest number 
        of votes. Break ties arbitrarily.
    \item \gen{Run intra-cluster consensus on global 
        state entry for $e$ at $k$$^\dagger$.}
    \item \gen{Once entry is committed to the local log$^\dagger$,} 
        update $fastMatchIndex[i]$ for all $i$ that voted for entry $e$.
\end{enumerate}
$\cdots$}}

\noindent\fbox{%
\parbox{\linewidth}{%
\textbf{When a follower receives AppendEntries message:}
\\$\cdots$
\begin{enumerate}
    \setlength\itemsep{0em}
    \setcounter{enumi}{3}
    \item If an existing entry conflicts with a new one, 
        overwrite the existing entry.
    \item Set all new entries $e.insertedBy = leader$
    \item \gen{Run intra-cluster consensus on global 
        state entry for $e$ at $k$$^\dagger$.}
\end{enumerate}}}
\noindent\fbox{%
\parbox{\linewidth}{%
\begin{enumerate}
    \setlength\itemsep{0em}
    \setcounter{enumi}{5}
    \item \gen{For both local and global variables$^\dagger$:}
            if $leaderCommit > commitIndex$, set $commitIndex$ to
        the minimum of $leaderCommit$ and $lastLogIndex$

\end{enumerate}}}
\vspace{0.01mm}

In inter-cluster consensus, a global state entry is run 
through intra-cluster consensus
when the local leader receives a proposal for a global entry,
when the global leader chooses an entry to be placed in the global log,
and when a local leader receives a global $AppendEntries$ message.

Sites now contain a global $commitIndex$ indicating what entries are
committed in the global log.
The global leader lets local leaders know when 
entries are committed through heartbeat messages, just as in Fast Raft. 
Local leaders now need to include their 
global $commitIndex$ in the $AppendEntries$ message to let followers at
the local level know which global entries are committed.

\subsection{Cluster Membership}

Similar to how Fast Raft's configuration defines the
membership of sites in the cluster,
C-Raft's local leaders are defined by a global configuration.
A new cluster can be formed if a new local leader is added 
to the global configuration. 

\vspace{0.5mm}
\noindent\fbox{%
\parbox{\linewidth}{%
\textbf{When a site wants to \gen{make a new cluster$^\dagger$}:}
\begin{enumerate}
    \setlength\itemsep{0em}
    \item Send a join request to \gen{local leaders$^\dagger$}.
    \item If join not accepted in join timeout, resend request.
\end{enumerate}}}

\noindent\fbox{%
\parbox{\linewidth}{%
\textbf{Upon receiving a join request:}
\begin{enumerate}
    \setlength\itemsep{0em}
    \item If not the leader, redirect to the leader.
    \item If a duplicate request, ignore.
    \item Catch up joining site.
\end{enumerate}}}

\noindent\fbox{%
\parbox{\linewidth}{%
\textbf{Upon catching up site:}
\begin{enumerate}
    \item Create a \gen{global$^\dagger$} log entry for the new configuration
        including the new site.    
    \item Start consensus at global level on new log entry. 
    \item When consensus reached (the entry is committed), 
        notify the new site.
\end{enumerate}}}
\vspace{0.01mm}

To form a new cluster, a site 
sends a global join request a member of the global configuration. 
The global leader catches up the site on global log entries if
necessary, then insert the new configuration to the log. It
then proposes the new global configuration entry for global
consensus.
If the configuration change is committed, the requesting site 
becomes the local leader of the new cluster. 
Sites that enter this new cluster may now send join requests 
to the local leader to start proposing entries for intra-cluster
consensus. 


\subsection{Safety}\label{safety2.sec}

We prove that C-Raft satisfies safety.

\begin{theorem}
    C-Raft satisfies safety in Definition~\ref{safety.def}.
\end{theorem}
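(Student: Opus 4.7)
The plan is to decompose the claim into two parts, one for each log, and reduce each to the safety result already established for Fast Raft. Concretely, I will argue: (i) the local log of every cluster satisfies safety, and (ii) the global log satisfies safety. Since the only entries any site commits lie in one of these logs, establishing (i) and (ii) gives the theorem.

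For (i), intra-cluster consensus is literally an execution of Fast Raft over the members of the cluster, acting on the local log and using the cluster's configuration. No step of inter-cluster consensus bypasses this: every modification to the local log (ordinary proposals, global state entries generated by the local leader, and configuration entries) is introduced through the same Fast Raft pathway described in Section~\ref{fast.sec}. Thus Lemma on Invariants~\ref{commit1.invar} and~\ref{commit2.invar} applied at the local level yields local safety directly.

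For (ii), I would show that inter-cluster consensus is an instance of Fast Raft among the local leaders, and then verify that the extra mechanism (the global state entry committed locally before any global action) does not break any invariant required by the Fast Raft safety proof. The key observation is that the persistent state a Fast Raft leader relies on---its inserted entries, its $possibleEntries$, its $fastMatchIndex$ updates, and in particular its $commitIndex$---is, for C-Raft local leaders, backed by committed global state entries in the local log. Hence if a local leader crashes mid-term, any newly elected local leader, by replaying the committed local log (which by part (i) is totally ordered and consistent with all other cluster members), inherits an identical view of the old local leader's global-level state. From the perspective of the global Fast Raft execution, replacing a failed local leader by its successor is therefore indistinguishable from the same ``cluster-as-a-member'' continuing to act, so the global execution remains a valid Fast Raft run. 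Invariants~\ref{commit1.invar} and~\ref{commit2.invar} applied at the global level then give global safety.

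Two subtleties need to be pinned down carefully, and I expect the second to be the real obstacle. First, I must verify that every action at the global level (a vote sent to the global leader, an insertion into the global log, an update to $fastMatchIndex$, or an overwrite in response to an $AppendEntries$) is gated on the corresponding global state entry being \emph{committed} in the local log, not merely inserted; the pseudocode in Section~\ref{craft.sec} does enforce this, and I would cite each case. Second, and harder, I must argue that configuration changes at either level preserve the reduction: specifically, that a new cluster forming (a join to the global configuration) and a local leader change within an existing cluster both preserve the ``single-site-change'' property needed for Fast Raft's configuration-change safety argument, and that the recovery algorithm run by a newly elected global leader still sees a classic quorum of self-approved global entries among the current local leaders, because those self-approved entries are drawn from their committed local logs. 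This step is where intra-cluster safety is essential to conclude inter-cluster safety, and tying the two levels together here is the crux of the proof.
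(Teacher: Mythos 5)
Your proposal is correct and follows essentially the same route as the paper: both levels are reduced to Fast Raft's safety, with the global state entry (committed to the local log before any global-log action) guaranteeing that a successor local leader inherits the failed leader's inter-cluster state, so the global execution remains a valid Fast Raft run to which the Fast Raft safety result applies. You are in fact more careful than the paper's own two-paragraph proof, which does not explicitly address the committed-versus-inserted gating or the configuration-change subtleties you flag.
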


\begin{proof}
Let us consider the global level of C-Raft. This is defined the same
as Fast Raft, with the exception of the recursive call of Fast Raft 
at the local level. Before an entry is inserted into the global log of any
site, Fast Raft is run on global state entry at the local level.
As proven in Section~\ref{fast.sec}, Fast Raft
satisfies safety, thus we can be certain that the entry will be 
committed to the local log of any future local leaders. 
Once an entry is inserted into the local
leader's global log, the cluster maintains the state of the local
leader in inter-cluster consensus, and the global level
of Fast Raft acts no differently than its local version. 
Fast Raft is proven to satisfy safety, and thus C-Raft satisfies safety.
\end{proof}

\subsection{Liveness}
As the global level of C-Raft is defined with a modified Fast Raft, 
similar liveness conditions apply.
Instead of liveness conditions being based on sites failing, 
the conditions require clusters to not fail.
If the conditions for liveness in Fast Raft do not hold within a cluster,
we consider this to mean the cluster has failed. For example, if 
a majority of the cluster has failed, then the local leader cannot 
insert global log entries and will block on consensus step for the
global state entry.
For liveness to be guaranteed at the global level, liveness must be
guaranteed for intra-cluster consensus in enough clusters 
for inter-cluster consensus to continue.

\section{Experiments}
\label{exp.sec}

To showcase the performance of Fast Raft and C-Raft, we performed
experiments on Amazon Web Services (AWS). Each site 
was set up as a separate EC2 AWS instance running Ubuntu 18.04 LTS. 
To simulate clusters, instances
were started in different regions around the world including
North America, South America, Europe, and Asia. Sites assigned to
the same AWS region were considered part of the same cluster. 
Roundtrip message latency varies between $10$ to $300$~ms between
AWS regions and is less than $1$~ms within regions.
To force certain percentages of message loss, we changed
traffic control settings in Linux using the $tc$ command.

We implemented classic Raft, Fast Raft, and C-Raft in Python 3.6 using UDP
sockets for communication. 
The leader's heartbeat timer was set to $100$ ms for intra-cluster consensus,
and $500$ ms for inter-cluster consensus. 
To measure latency of committing
an entry, the proposer started a timer when first proposing an entry and
stopped the timer when the leader notified it that the entry was committed.
The proposer only proposed a new entry after the previous
entry was committed.

\subsection{Classic Raft vs. Fast Raft}

First, we compared the commit latency of classic Raft and Fast Raft in a single
cluster. We chose a site at random
to be the proposer and measured the average latency for
entries committed over $100$ trials when using classic Raft and Fast Raft.  
In the experiments, we had five sites in one region and
varied message loss between $0\%$ and $10\%$.

\begin{figure}[t]
    \centering
    \includegraphics[scale=0.25]{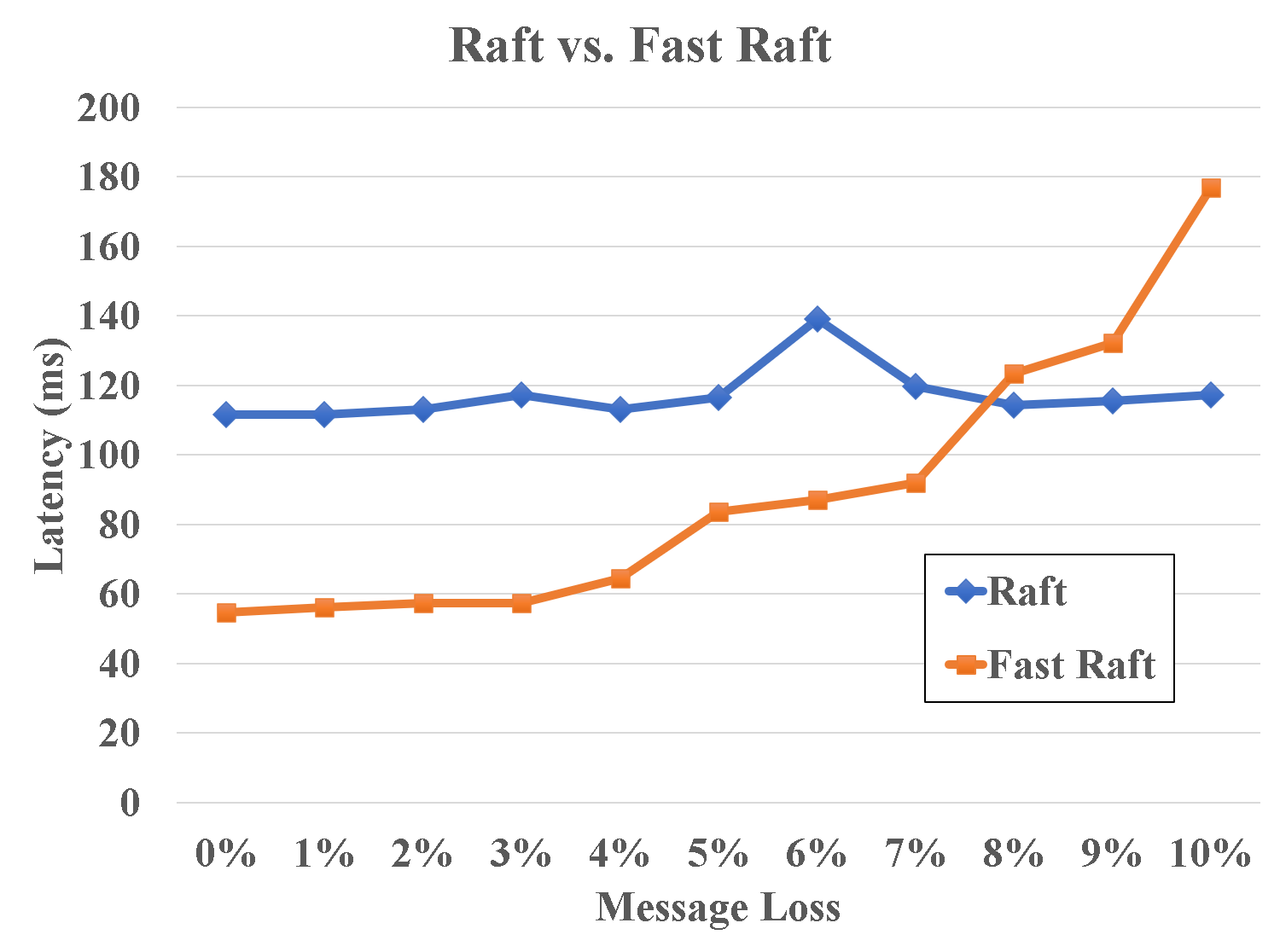}
    \caption{The average latency of committing entries
    in classic Raft and Fast Raft, with five sites and different percentages of message loss. 
    }
    \label{raft_v_fraft.fig}
\end{figure}

Figure~\ref{raft_v_fraft.fig} shows the results of the experiment. 
When message loss was low, Fast Raft achieved about half the latency
as classic Raft. 
However, as message loss increased, Fast Raft started to
degrade in performance while classic Raft maintained similar latency.
As more messages were dropped, the classic-track was used more in Fast
Raft, causing it to face an extra message round more often.
This reinforces the observation that Fast Raft
is best used when message loss is not common.

\subsection{Silently Leaving a Cluster}

Next, we studied the effect of sites silently 
leaving a cluster on commit latency in Fast Raft. We 
started with a cluster of five sites, had two of the
sites silently leave the cluster, and measured the
latency on committing entries during this period. 
The fast quorum size before leaves was four. As such, during the 
period before the leader detected the leaves, proposed entries
used the classic track. Message loss was set to
$5\%$ and the leader's member timeout occurred after five missed
heartbeat responses. 
\begin{figure}[t]
    \centering
    \includegraphics[scale=0.4]{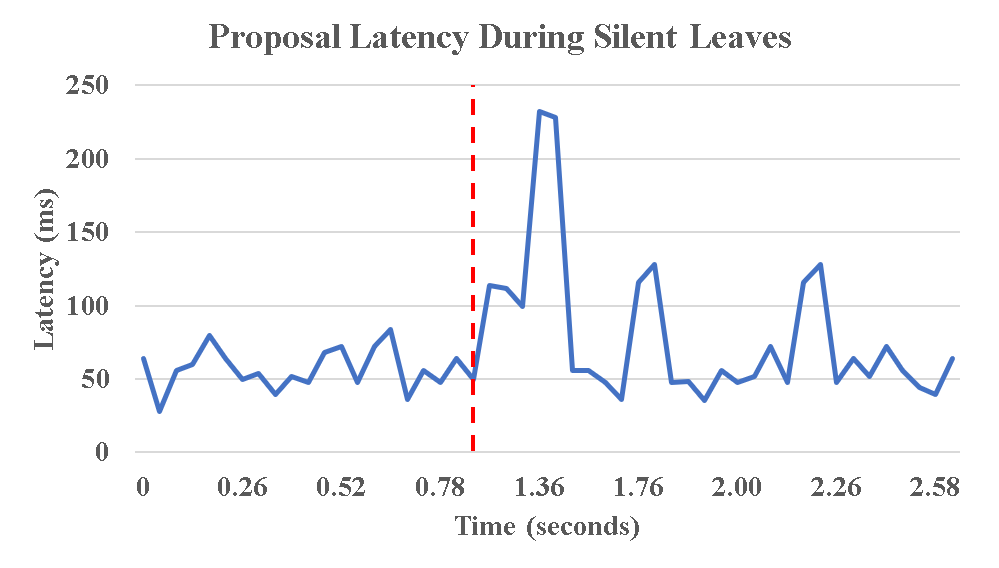}
    \caption{Latency on proposed entries being committed in 
    Fast Raft in a cluster of five sites. The vertical red line 
    indicates when two sites silently leave the cluster.}
    \label{exp3.fig}
\end{figure}

Figure~\ref{exp3.fig} shows the results of this experiment.
The vertical red line indicates when the sites left the cluster.
Before this point, the proposer was typically able to use the fast
track. The variability in latency of proposals here can be due to
proposals occurring closer or further from the heartbeat timeout
of the leader, or due to message loss causing the leader to use
the classic track. Once the leave occurred, there was a brief period 
where the fast track was no longer available to the proposer. The
very large spike to above $200$ ms in this section was likely due to the concurrent
proposals with the leader for a configuration change. 
After this point, proposal latency returned to a range of $50$ to $100$ ms.

\subsection{Classic Raft vs. C-Raft}
Finally, we compared the 
throughput of Raft and C-Raft. Sites in
the same AWS region formed a cluster. We
chose one proposer at random per cluster. Each proposer waited
until its last proposed entry is committed before proposing
another. We compared throughput based on 
how many entries were committed to the global log in classic Raft 
and C-Raft, averaged over five $3$-minute trials. 

\addtolength{\textheight}{0cm}
For the C-Raft
implementation, a cluster proposed a batch of entries
to the global log after ten entries were committed in the local
log. We tested with $20$ sites total, split evenly over
a varying number of clusters. 
Note, there were more proposers in the system as the number
of clusters increased for both C-Raft and Raft. 
\begin{figure}[t]
    \centering
    \includegraphics[scale=0.25]{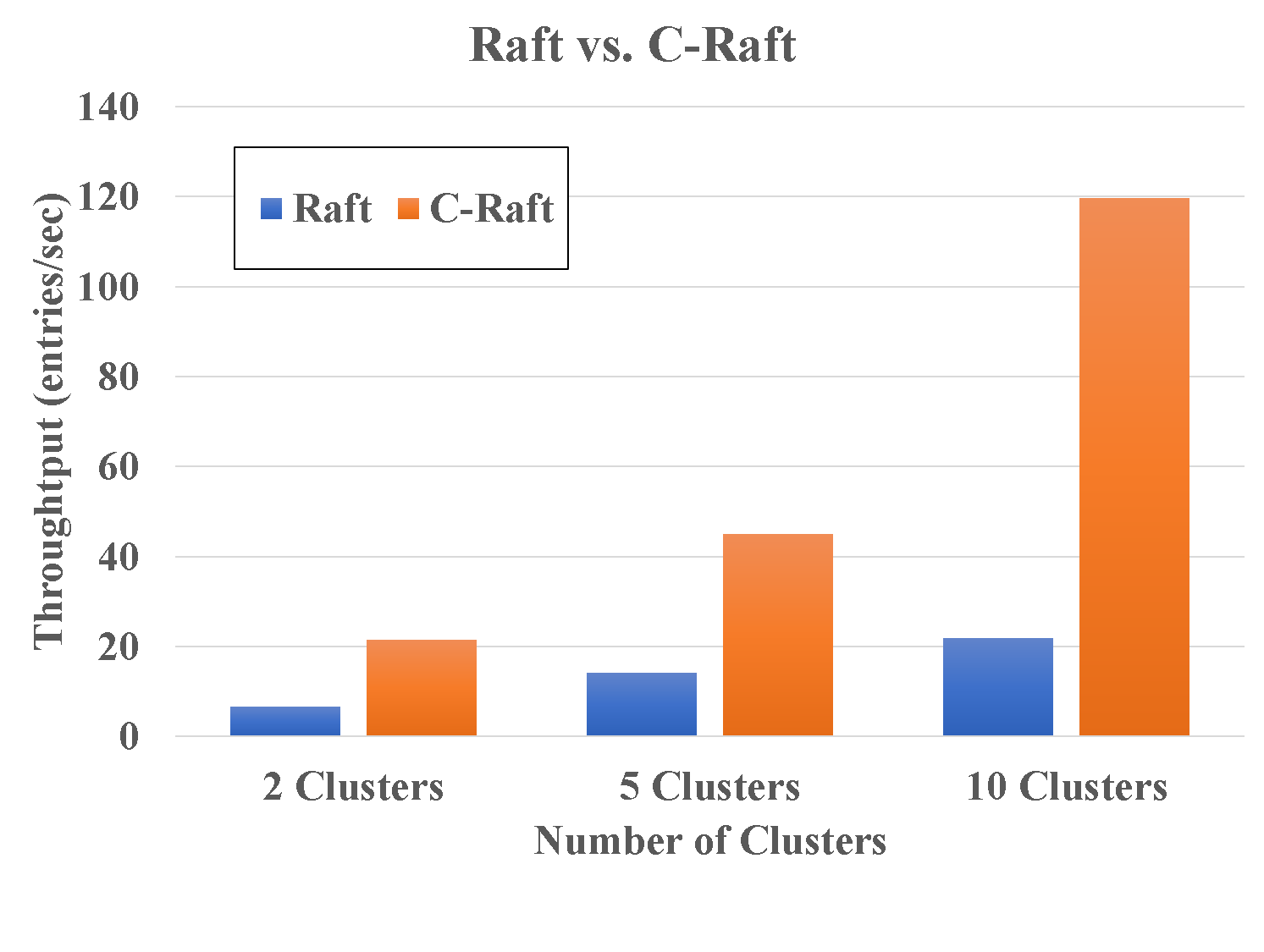}
    \caption{Average throughput of committing entries in classic Raft versus C-Raft. Experiment run for
    with $20$ sites total. Each cluster is in a different AWS region.
    }
    \label{raft_v_craft.fig}
\end{figure}
The results are shown in Figure~\ref{raft_v_craft.fig}.
C-Raft showed significant improvements over classic Raft in these experiments,
with a $5x$ throughput increase for $10$ clusters.
In addition to using Fast Raft for intra-cluster and inter-cluster consensus, 
C-Raft was able to run consensus quickly within clusters where latency was low, then
only run slower consensus at the global level between leaders (rather
than all sites) periodically. Batching
entries for consensus has a large impact when sites are distributed
geographically in this manner. 

\section{Related Work}
\label{related.sec}

As discussed before, the Paxos algorithm~\cite{lamport2001paxos} 
is a popular consensus algorithm.
Fast Paxos was presented in~\cite{lamport2006fast} with the goal
of removing a round of messaging from the classic Paxos in
the absence of concurrent proposals.
A common critique
of these algorithms is how much of the implementation is left 
up to the reader. Some users looked for consensus protocols that were
more well-defined, and more ready for implementation.
Raft~\cite{ongaro2014consensus} is a consensus protocol that filled this need. 
Fast Raft takes the same approach to speeding up Raft as
Fast Paxos does for Paxos, while maintaining
the understandability design philosophy of Raft.

Many Paxos variations have been proposed to improve upon
the original algorithm. 
Generalized Paxos~\cite{lamport2005generalized} aims to 
mitigate the shortcomings of Fast Paxos by defining a 
partial ordering of committed entries. This change allows
for concurrent proposals of non-conflicting entries to
occur without slowdowns. 
Egalitarian Paxos~\cite{moraru2013there} notes that 
the leader creates a potential bottleneck in the system, and
proposes an algorithm that does not rely on a fixed stable
leader.
Flexible Paxos~\cite{howard2017flexible} makes a simple but strong observation that 
leader election and replication phases can have
different quorum sizes, allowing for smaller replication
quorum sizes without violating safety.
Dynamic Paxos~\cite{nawab2018dpaxos} builds on Flexible Paxos, applies 
it to a geographically distributed setting, and further
reduces the leader election quorum size.
Applying the improvements of these algorithms to Raft would
be an interesting future direction.

Delegator Paxos~\cite{liu2016d} is a hierarchical form of 
Paxos. Similar to C-Raft, in Delegator Paxos, Paxos is run in separate clusters
for local replication. 
Consensus inside a cluster runs for a specified 
$k$ user requests. After this point, a global leader
is chosen, 
and leaders of each cluster take part in
Paxos at a global level. Batches of entries are replicated to 
other clusters. The major difference from classic Paxos is
that before a follower can accept a global leader's proposed batch, 
it must replicate to a majority inside its cluster to ensure safety
in the case of failures.
C-Raft takes inspiration from this model to define a
hierarchical model with additional benefits of dynamic
clusters and faster consensus.

Another relevant Paxos variant is 
Institutionalised Paxos~\cite{sanderson2012institutionalised}. 
This work also considers a system that is divided into clusters.
To deal with dynamic cluster membership,
the leader keeps track of the number of sites in the cluster
and how the quorum size grows/shrinks.
It is assumed that the leader does not fail or leave the cluster
without handing off the value for the number of sites.
This is impractical in many systems where the leader may
fail or leave the cluster unexpectedly.
Raft and C-Raft, in contrast, deal with membership changes by 
keeping track of the number of sites through reaching consensus on
a configuration.  

\section{Conclusion}
\label{conclusion.sec}

We presented two new consensus algorithms:
Fast Raft, a variation on the Raft consensus algorithm that 
speeds up consensus in typical operation, and C-Raft, which defines 
a hierarchical model of Fast Raft consensus. 
Both algorithms deal with  membership
changes in dynamic networks. We proved safety for both algorithms
and discussed their liveness requirements. Finally, we presented an experimental evaluation of both algorithms in AWS.
Our experiments show that Fast Raft can achieve half the latency of classic Raft 
when message loss is low, and C-Raft can achieve a $5x$ throughput improvement
over classic Raft in a globally distributed
scenario.
In future work, we plan to explore extending C-Raft to support 
partially-ordered log entries, similar to Generalized Paxos~\cite{lamport2005generalized}.



\bibliographystyle{IEEEtran}
\bibliography{references}

\end{document}